\def\BibTeX{{\rm B\kern-.05em{\sc i\kern-.025em b}\kern-.08em
    T\kern-.1667em\lower.7ex\hbox{E}\kern-.125emX}}
\newcommand{\ex}{prov-example}
\definecolor{red1}{RGB}{195,0,0}
\definecolor{red2}{RGB}{246,136,93}
\definecolor{yellow1}{RGB}{247,175,47}
\definecolor{yellow2}{RGB}{255,192,96}
\definecolor{yellow3}{RGB}{255,255,96}
\definecolor{green1}{RGB}{214,249,121}
\definecolor{green2}{RGB}{113,158,65}
\newtheorem{theorem}{Theorem}[section]
\newtheorem{proposition}[theorem]{Proposition}
\newtheorem{example}[theorem]{Example}
\newtheorem{definition}[theorem]{Definition}
\def\TextShift{1pt}
\tikzset{
  myrect/.style={
    rectangle split, 
    rectangle split horizontal,
    rectangle split parts=#1,
    anchor=west,
  },
  mytext/.style={
    arrow box,
    draw=#1!70!black,
    fill=#1,
    align=center,
    line width=1pt,
    font=\sffamily,
    font=\small
  },
  mytextb/.style={
    mytext=#1,
    anchor=north,
    arrow box arrows={north:0.5cm}  
  },
  mytexta/.style={
    mytext=#1,
    anchor=south,
    arrow box arrows={south:0.35cm}  
  }
}
\newcommand\AddText[5][]{
  \if#5l\relax
    \node[mytextb=#2,yshift=-\TextShift,#1] 
      at (part#4.south west) {\strut#3\strut};
  \fi
  \if#5L\relax
    \node[mytexta=#2,yshift=\TextShift,#1] 
      at (part#4.north west) {\strut#3\strut};
  \fi
  \if#5m\relax
    \node[mytextb=#2,yshift=-\TextShift,#1] 
      at ( $ (part#4.south west)!0.5!(part#4.south east) $ ) {\strut#3\strut};
  \fi
  \if#5M\relax
    \node[mytexta=#2,yshift=\TextShift,#1] 
      at ( $ (part#4.north west)!0.5!(part#4.north east) $ ) {\strut#3\strut};
  \fi
  \if#5r\relax
    \node[mytextb=#2,yshift=-\TextShift,#1] 
      at (part#4.south east) {\strut#3\strut};
  \fi
  \if#5R\relax
    \node[mytexta=#2,yshift=\TextShift,#1] 
      at (part#4.north east) {\strut#3\strut};
  \fi
}
  \providecommand\BibTeX{{%
    \normalfont B\kern-0.5em{\scshape i\kern-0.25em b}\kern-0.8em\TeX}}}
\date{}
\begin{document}

\title{Towards Inferring Queries from Simple and Partial Provenance Examples}

\author{
  Amir Gilad\\ Tel Aviv University \\ amirgilad@mail.tau.ac.il
  \and
  Yuval Moskovitch \\ University of Michigan \\ yuvalm@umich.edu
}


\maketitle

\begin{abstract}
The field of query-by-example aims at inferring queries from output examples given by non-expert users, by finding the underlying logic that binds the examples.
However, for a very small set of examples, it is difficult to correctly infer such logic. To bridge this gap, previous work suggested attaching {\em explanations} to each output example, modeled as provenance, allowing users to explain the reason behind their choice of example. In this paper, we explore the problem of inferring queries from a few output examples and {\em intuitive explanations}. 
We propose a two step framework: (1) convert the  explanations into (partial) provenance and (2) infer a query that generates the output examples using a novel algorithm that employs a graph based approach. This framework is suitable for non-experts as it does not require the specification of the provenance in its entirety or an understanding of its structure. We show promising initial experimental results of our approach. 
\end{abstract}

\section{introduction}
With the growing interest in data science and the ubiquity of data in recent years, the need for tools that allow non-expert users to interact with databases becomes crucial.
Different approaches have been proposed to facilitate manners in which non-expert users can query the database, including natural language interfaces~\cite{nalir} and query by example \cite{shen,Bonifati}, where the user provides the system output examples in order to obtain a query that produces the same output. 
In this paper we focus on an alternative approach, based on the use of explanations for output examples. This notion relies on the perception that providing a small number of output examples, along with explanations may be more intuitive and easier for users than solely providing many output examples. These explanations encapsulate information on the query structure which significantly reduces the number of examples needed to infer the intended query.

\begin{example}\label{ex:running}
Consider the Microsoft Academic Database (MAS)~\cite{mas}. 
An instance with a similar schema is depicted in Figure~\ref{fig:db}. Also consider the query returning all conferences in the database area and authors from Tel Aviv University (TAU) who published papers in these conferences (shown as a Conjunctive Query in Figure \ref{fig:sql}). A user who does not know how to formulate it can input a few examples of outputs and explain their rational. For instance, the two output examples can be ``SIGMOD, Alice'' and ``CIKM, Bob'', and their respective explanations saying that Alice published the papers ``X'' and ``Y'' in SIGMOD, which is a database conference, and that Bob published the paper  ``Z'' in CIKM which is also a database conference, and both are from Tel Aviv University. 
\end{example}

The idea of inferring queries from provenance was first proposed in~ \cite{DeutchG16,DeutchG19}. 
A key property of any provenance based solution is its level of detail. In \cite{DeutchG19} we studied provenance polynomials, trio, positive Boolean expressions, and why-provenance, each corresponds to an explanation with a different granularity. 
We showed that consistent conjunctive queries can be inferred, even when given only a few output examples and their provenance. A notable provenance model that is missing in \cite{DeutchG19} is that of lineage~\cite{lin} where the explanation of an output tuple $t$ is the set of input tuples' annotations that contribute to the generation of $t$. 

\begin{example}\label{ex:provenance}
Consider again the database depicted in Figure \ref{fig:db} and the query from Example \ref{ex:running}. The lineage of the tuple ``SIGMOD, Alice'' is the set of tuples annotated (in the prov. column) by $o_2, a_2, p_1, p_2, w_1, w_2, c_2, dc_2$ and $d_1$. Note that this set depicts two different ways to obtain the output, one using the fact that she published ``X'' and the other that she published ``Y''. 
\end{example}



Providing precise and ``full'' provenance as explanations may be tedious or even impossible for non-expert users, and requires an understanding of the schema. A more likely non-expert explanation would be partial. For instance, a more natural explanation for the output  ``SIGMOD, Alice'' from Example \ref{ex:running} may contain only the tuples $o_2,a_2, p_1, c_2$ and $d_1$. Moreover, a major barrier in the usability of the provenance based query inference approach, is that formulating explanations into provenance, even in a partial form, may be a challenging task for non-expert users. To this end we propose a two phase framework. Given a small set of output examples along with informal explanations, we first  convert these explanations into (partial) provenance. Then, we infer a query that generates each output example from its corresponding provenance.


{\em In this paper we describe our ongoing work toward a solution for the problem of inferring queries based on examples with intuitive explanations.}
We propose to use values as explanation, and leverage techniques developed in previous work~\cite{Psallidas} for the first part. The basic idea is to identify the tuples that contain the values given as explanations using a similarity score. 


\begin{example}\label{ex:values}
Consider again the database shown in Figure \ref{fig:db} and the output example 'SIGMOD, Alice'. A possible explanation may contain values such as `Tel Aviv University', `Alice', ``X'', ``Y'', `SIGMOD', and `Databases'. These values would then be mapped to the tuples $o_2, a_2, p_1, p_2, c_2, d_1$, respectively. 
\end{example}

The second part of the proposed solution is inferring queries from output examples and partial lineage.
An essential question that arises within this notion of partial provenance, is {\bf how partial} is the provenance. Possible characterization of the explanation's form may include the presence or absence of projected tuples or join tuples. 
In this paper, we describe our results for \emph{joinless lineage}--e.g, lineage with missing join tuples. We start with this fragment since it allows for intuitive explanations, and it overcomes an inherent limitation of the mapping component where some tuples in the lineage are absent in the mapping values function range. For instance, there is no value in Example \ref{ex:values} that can be mapped to the tuples $w_1$ and $w_2$, although it is part of the provenance of the intended query, as shown in Example \ref{ex:provenance}. 
Other possible forms of partial lineage are left for our ongoing and future work.
\begin{figure}
    \centering \scriptsize
    \begin{minipage}{.5\linewidth}
        \centering
        \caption*{\footnotesize Rel. \textit{org}}\label{tbl:organization}
        \begin{tabular}{ c | c | c | c | c | c |}
            \cline{2-3} & oid & oname \\
            \hline $o_1$ & 1 & UMICH \\
            \hline $o_2$ & 2 & TAU \\
            \hline
        \end{tabular}
        
    \end{minipage}%
    \begin{minipage}{.5\linewidth}
        \centering
        \caption*{\footnotesize Rel. \textit{author}}\label{tbl:author}
        \begin{tabular}{ c | c | c | c | c | c |}
            \cline{2-4} & aid & aname & oid \\
            \hline $a_1$ & 3 & Carol & 1 \\
            \hline $a_2$ & 4 & Alice & 2 \\
            \hline $a_3$ & 5 & Bob & 2 \\
            \hline
        \end{tabular}
        
    \end{minipage}
    \begin{minipage}{.4\linewidth}
        \centering
        \caption*{\footnotesize Rel. \textit{domain\_conf}}\label{tbl:domainconference}
        \begin{tabular}{ c | c | c | c | c | c |}
            \cline{2-3} & cid & did\\
            \hline $dc_1$ & 10 & 18 \\
            \hline $dc_2$ & 11 & 18 \\
            \hline
        \end{tabular}
        
    \end{minipage}%
    \begin{minipage}{0.6\linewidth}
        \centering
        \caption*{\footnotesize Rel. \textit{pub}}\label{tbl:publication}
        \begin{tabular}{ c | c | c | c | c | c |}
            \cline{2-5} & wid & cid & ptitle & pyear\\
            \hline $p_1$ & 6 & 11 & ``X..." & 2014 \\
            \hline $p_2$ & 7 & 11 & ``Y..." & 2014 \\
            \hline $p_3$ & 8 & 10 & ``Z..." & 2007 \\
            \hline
        \end{tabular}
        
    \end{minipage}
    \begin{minipage}{.5\linewidth}
        \centering
        \caption*{\footnotesize Rel. \textit{writes}}\label{tbl:writes}
        \begin{tabular}{ c | c | c | c | c | c |}
            \cline{2-3} & aid & wid\\
            \hline $w_1$ & 4 & 6 \\
            \hline $w_2$ & 4 & 7 \\
            \hline $w_3$ & 5 & 8 \\
            \hline $w_4$ & 3 & 6 \\
            \hline
        \end{tabular}
        
    \end{minipage}%
    \begin{minipage}{.5\linewidth}
        \centering
        \caption*{\footnotesize Rel. \textit{conf}}\label{tbl:conference}
        \begin{tabular}{ c | c | c | c | c | c |}
            \cline{2-3} & cid & cname\\
            \hline $c_1$ & 10 & CIKM \\
            \hline $c_2$ & 11 & SIGMOD \\
            \hline
        \end{tabular}
        
    \end{minipage}
        
    \begin{minipage}{.33\linewidth}
        \centering
        \caption*{\footnotesize Rel. \textit{domain}}\label{tbl:domain}
        \begin{tabular}{ c | c | c | c | c | c |}
            \cline{2-3} & did & dname\\
            \hline $d_1$ & 18 & DB \\
            \hline
        \end{tabular}
        
    \end{minipage}

    \caption{DB instance}\label{fig:db}
    \vspace{-5mm}
\end{figure}

\paragraph*{Related Work}
Query-by-example \cite{BonifatiCS16,shen,Psallidas}, is the problem of inferring queries based on output examples given by the user. 
Along side these works, various models of provenance have been proposed in the literature~\cite{GKT-pods07,lin,Why}. 
Recent work~\cite{DeutchG16,DeutchG19} has explored the case where a small number of positive examples are provided, coupled with explanations for these examples in the form of provenance. These assume that outputs with multiple explanations are separated into individual ones and either are given in full, or are missing mention of duplicate uses of the same tuple in an explanation. 
Our approach attempts to fill in the gap focusing on the lineage formalism \cite{lin}, which was not included in \cite{DeutchG19}, and thus relaxes the desiderata of the explanations, allowing for missing tuples and unseparated explanations for the same output. 

\section{Initial Results}
\label{sec:queryInf}


We first describe our initial results for the query inference phase. We formally define the problem of query inference from partial lineage and present a novel algorithm for the joinless lineage case. We then briefly discuss our approach for provenance extraction.


\subsection{The Query Inference Problem}
We use the foundations laid in \cite{DeutchG19} to define the problem of inferring queries by lineage. As in \cite{DeutchG19}, we focus on Conjunctive Queries (CQ). 
An assignment of a CQ $Q$ with respect to a database $D$ is a mapping of the relational atoms of $Q$ to tuples in $D$ that respects
relation names.
The input to the problem is a set of output tuples, each with its (possibly partial) lineage. The lineage is given as a set of input tuples identifiers. We use $(I,O)$ to denote a pair of output example $O$ and it's lineage $I$ (i.e., a set of input tuple identifiers).  As a simple example, consider the table in Figure \ref{fig:example}, referring to the annotation of the tuples in Figure \ref{fig:db}. The output column along with the Full Prov. column forms an example where the first and seconds rows can be denoted by $(I_1, O_1)$ and $(I_2, O_2)$. The Partial Prov. column depicts examples of partial lineage for the same outputs.

\setlength{\belowcaptionskip}{-5pt}
\begin{figure}
    
    \centering
        \scriptsize{
            \begin{tabular}{| c | c | c | c | c | c |}
            \hline Outupt & Full Prov. & Partial Prov. \\
            \hline SIGMOD, Alice & \begin{tabular}{@{}c@{}}$o_2, a_2, p_1, p_2, w_1,$\\$w_2, c_2, dc_2, d_1$\end{tabular} & $o_2, a_2, p_1, c_2, d_1$ \\
            \hline EDBT, Bob & \begin{tabular}{@{}c@{}}$o_2, a_3, p_3, w_3,$\\ $c_1, dc_1, d_1$\end{tabular} & $o_2, a_3, p_3, c_1, d_1$ \\
            \hline
        \end{tabular}
        }
        \caption{\ex}\label{fig:example}
    \end{figure}
    
    \begin{figure}
        \centering
        \scriptsize{
        \begin{tabular}{|l|}
            \hline
            \verb"q(cname, aname) :- author(aid, aname, oid),"\\
            \verb"writes(aid, wid), pub(wid, cid, ptitle, pyear),"\\
            \verb"conf(cid, cname), domain_conf(cid, did),"\\
            \verb"domain(did, dname), oname = `TAU', dname = `Databases'"\\
            \hline
        \end{tabular}}
        \caption{Intended query}\label{fig:sql}
    \vspace{-2mm}
    \end{figure}

We can then define the notion of consistency with respect to an output example and (partial) lineage, and introduce our problem statement. Intuitively, we look for a query whose output contains the example output, and for each output tuple, its provenance is ``reflected'' in the computation of the tuple by the query. 

\begin{definition}[adapted from \cite{DeutchG19}]\label{def:gen_problem}
Given an $(I,O)$, a database $D$, and a CQ $Q$, we say that $Q$ is consistent with respect to the example if there exists $S\subseteq D$ such that
$O\in Q(I \cup S)$ and $I\subseteq lin(Q|_O(I\cup S))$, where $lin(Q|_O(I\cup S))$ is the lineage of the tuple $O$ according $Q$ evaluated on $I\cup S$. 
\end{definition}

The provenance specified in the explanation can be partial but has to be a part of the assignment and cannot includes irrelevant tuples. A partial explanation is any non-empty subset of the lineage of the output.
For example, consider the the partial explanation, $I$, shown in the Partial Prov. column in the first row in Figure \ref{fig:example}. The query in Figure \ref{fig:sql} is consistent w.r.t. it since we have $S = \{w_1, dc_2\}$ such that $(SIGMOD, Alice) \in Q(I\cup S)$ and $I$ is a partial explanation.



We refer to a set of pairs $(I,O)$ as a {\em \ex}. Given a CQ $Q$, we say that $Q$ is consistent with respect to this \ex\ if it is consistent with each one of the $(I,O)$ pairs it contains. A consistent query can be very general, in fact we showed in~\cite{DeutchG19} that there exists a set of $(I,O)$ examples with an infinite number of non-equivalent consistent queries. 
A major factor influencing the number of consistent queries is the length of the query, and in particular, the number of self-joins allowed. 
Therefore, a natural desideratum is a small number of joins. 
We utilize the concept of {\em join graph} for conjunctive queries \cite{Kalashnikov2018}. 
In a join graph every two atoms joined together in the CQ are joined by an edge .
For instance, the join graph of the query in Figure \ref{fig:sql} is depicted in Figure \ref{fig:join}. We say that a query is connected if its join graph is connected.




\setlength{\belowcaptionskip}{-2pt}
\begin{figure}
\centering
\begin{subfigure}[b]{.5\linewidth}
\centering
\includegraphics[scale = 0.6]{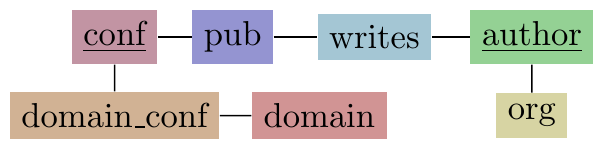}
\caption{Join graph of the query} \label{fig:join}
\end{subfigure}%
\begin{subfigure}[b]{.5\linewidth}
\centering
\includegraphics[scale = 0.6]{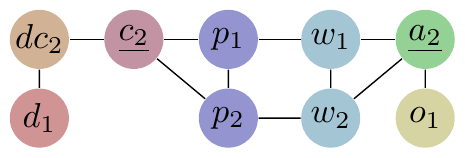}
\caption{Prov. graph of row 1}\label{fig:prov_graph}
\end{subfigure}%

\caption{(a) Query join graph in Figure \ref{fig:sql}, (b) provenance graph of row 1 in Figure \ref{fig:example}} 
\end{figure}

\begin{definition}\label{def:minimal}

A consistent query $Q$ with respect to a given \ex\  is consistent-minimal if (1) $Q$ is consistent and connected, and (2) for every query $Q'$ such that $Q'$ is consistent and connected, the number of nodes in the join graph of $Q$ is smaller or equal to the number of nodes in the join graph of $Q'$. 
\end{definition}

Intuitively, consistent-minimal queries are more natural as they offer a concise reason for the given output according to the explanation. 
To demonstrate, consider the following query \\
\begin{footnotesize}
\begin{tabular}{l}
\verb"q(cname, aname) :- author(aid, aname, oid),"\\
\verb"writes(aid, wid1), pub(wid1, cid, ptitle1, pyear1),"\\
\verb"writes(aid, wid2), pub(wid2, cid, ptitle2, pyear2),"\\
\verb"conf(cid, cname), domain_conf(cid, did),"\\
\verb"domain(did, dname), oname = `TAU', dname = `Databases'"\\
    \end{tabular}   
\end{footnotesize}

\noindent
It is consistent but not minimal. Conversely, the query depicted in Figure \ref{fig:sql} is consistent-minimal. 
Our goal is then to find a consistent-minimal query given a \ex.





\subsection{Provenance Graph}\label{subsec:prov_graph}
Given a pair $(I,O)$ where $I = \{t_1, \ldots, t_n\}$, its provenance graph $G_P = (V_P, E_P)$ is defined as follows. $V_P$ are the tuple annotations and $E_P = \{\{t_i,t_j\}\mid \exists A,B.~ t_i.A = t_j.B\}$, i.e., there is an edge between each two tuples who share a constant, including self edges (e.g., the tuple $R(1,1)$ would have a self edge). 
The provenance graph of the first row of the \ex\ in Figure~\ref{fig:example} is shown in Figure~\ref{fig:prov_graph}. 
We next establish the connection between the join and provenance graphs through the concept of graph homomorphism.

\begin{definition}\label{def:hom}
Let $G_J= (V_J, E_J)$ be a query join graph and $G_P = (V_P, E_P)$ a provenance graph of an $(I,O)$. A graph homomorphism between $G_J$ and $G_P$ is a function $h:V_J\to V_P$ such that (1) $(h(u),h(v))\in E_P$ if $(u,v)\in E_J$, particularly, for every variable shared by $u,v$ at index $i$, $h(u),h(v)$ share a constant at the same index,
(2) $u, h(u)$ have the same relation name, and (3) if $v\in V_J$ has a projected variable at index $i$ to the head of $Q$ at index $j$, then $h(v)$ has the same constant at index $i$ as the constant of $O$ at index $j$. 
\end{definition}

\begin{example}\label{ex:homo}
Reconsider the join graph $G_J$ in Figure \ref{fig:join} and the provenance graph $G_P$ in Figure \ref{fig:prov_graph}. An homomorphism $h:V_J \to V_P$ is $h(conf) = c_2$, $h(author) = a_2$, $h(pub) = p_1$, $h(writes) = w_1$, $h(org) = o_2$, $h(domain) = d_1$, $h(domain\_conf) = dc_2$. 
\end{example}


Given a set of homomorphisms $\mathcal{H} = \{h_1, \ldots, h_k\}$ between a join graph $G_J$ and a provenance graph $G_P$, we say that $\mathcal{H}$ {\em covers} $G_P$ if $\bigcup_{h \in \mathcal{H}} h[V_J] = V_P$. 
Let $Q$ be a CQ with a join graph $G_J$, let $(O,I)$ be a row of a \ex\ with a provenance graph $G_P$. We can show that.

\begin{proposition}\label{prop:consistent_graph}
There exists a set $\mathcal{H}$ of homomorphisms from $G_J$ to $G_P$ that covers $G_P$ iff $Q$ is consistent with respect to $(O,I)$. 
\end{proposition}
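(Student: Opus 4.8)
The plan is to prove both directions through a tight correspondence between homomorphisms $h\colon G_J\to G_P$ (Definition~\ref{def:hom}) and satisfying assignments of $Q$ whose images lie in $I$ and whose heads evaluate to $O$. The first and most reusable step is a single-derivation lemma: a map $h\colon V_J\to V_P$ is a homomorphism exactly when the induced variable assignment $\alpha_h$ is a valid assignment of $Q$ over $I$ that derives $O$. Here $\alpha_h$ sends each query variable $x$ to the constant that $h$ forces on the atoms containing $x$. To see this is well defined I would use condition~(1): it gives, for any two atoms sharing $x$, equality of the corresponding constants of their $h$-images; since $G_J$ places an edge between \emph{every} pair of atoms that share a variable, this pairwise agreement propagates to a single constant per variable, so $\alpha_h$ is consistent. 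Condition~(2) guarantees $\alpha_h$ respects relation names, condition~(1) guarantees all join equalities hold, and condition~(3) guarantees the projected positions of $\alpha_h$ match $O$. Thus $\alpha_h$ is precisely a derivation of $O$ using the tuple set $h[V_J]\subseteq I$, and conversely any such derivation reads off a homomorphism by setting $h(u)$ to the tuple assigned to atom $u$.

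With this correspondence in hand, the covering condition $\bigcup_{h\in\mathcal H}h[V_J]=V_P$ translates directly into a statement about lineage. For the direction ``covering $\Rightarrow$ consistent'', each $h\in\mathcal H$ yields a derivation $\alpha_h$ of $O$ that uses only tuples of $I$, so $O\in Q(I)$; moreover the tuples appearing across all these derivations are exactly $\bigcup_{h}h[V_J]=V_P=I$, so $I\subseteq lin(Q|_O(I))$. Taking $S=\emptyset$ in Definition~\ref{def:gen_problem} then witnesses consistency. (Nonemptiness of $\mathcal H$ is forced because a partial explanation is nonempty, so $V_P\neq\emptyset$.)

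For the converse, I would start from a witness $S$ with $O\in Q(I\cup S)$ and $I\subseteq lin(Q|_O(I\cup S))$. Since each $t_i\in I$ belongs to the lineage, there is a derivation $\alpha_i$ of $O$ that uses $t_i$; reading each $\alpha_i$ back through the correspondence produces a homomorphism $h_i$, and the family $\mathcal H=\{h_i\}$ hits every $t_i$, hence covers $G_P$. The subtle point here, which I expect to be the main obstacle, is that a derivation $\alpha_i$ witnessing $t_i\in lin(Q|_O(I\cup S))$ may assign some atoms to tuples of the auxiliary set $S$ rather than to tuples of $I$, so it is not literally a map into $V_P$ (whose nodes are only the tuples of $I$). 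The crux of the backward direction is therefore to show that each such derivation can be converted into a genuine homomorphism whose image lies entirely in $V_P$ while still covering $t_i$.

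This obstacle is exactly where the structure of the setting must be used. When $I$ is the full lineage of $O$ under $Q$, every tuple used in any derivation of $O$ already lies in $I$, so $\alpha_i$ maps all of $V_J$ into $V_P$ and the conversion is immediate; this is the clean case underlying Example~\ref{ex:homo}. For genuinely partial (joinless) lineage the absent join tuples are precisely the atoms that $\alpha_i$ must route through $S$, and handling them is the delicate step the argument has to address. Accordingly, I would present the proof of the equivalence around this single-derivation correspondence, discharge the easy ``$\Rightarrow$'' direction in full, and concentrate the technical work on justifying the extraction of $V_P$-valued homomorphisms from the lineage-witnessing derivations in the ``$\Leftarrow$'' direction.
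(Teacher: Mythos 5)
Your proposal follows essentially the same route as the paper's proof: a correspondence between homomorphisms $h\colon G_J\to G_P$ satisfying Definition~\ref{def:hom} and assignments of $Q$ deriving $O$, with conditions (1)--(3) supplying the join equalities, relation-name matching, and head agreement, and with the covering condition translating into ``every tuple of $I$ is used by some derivation,'' i.e.\ $I\subseteq lin(Q|_O(\cdot))$. Your forward direction (taking $S=\emptyset$) is complete and matches the paper's.

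The one place you diverge is that you explicitly flag, but do not discharge, the obstacle in the backward direction: a derivation witnessing $t_i\in lin(Q|_O(I\cup S))$ may route some atoms through $S$, so the induced map need not land in $V_P$. You are right that this is a genuine issue --- the paper's own proof silently asserts that ``every assignment defines a homomorphism that maps $G_J$ to $G_P$,'' which is only true when the assignment's image lies entirely in $I$. However, the ``technical work'' you defer cannot actually be carried out in general: for genuinely partial lineage the backward implication is simply false. Take the query of Figure~\ref{fig:sql} and the Partial Prov.\ of row~1 in Figure~\ref{fig:example}; the query is consistent (witnessed by $S=\{w_1,dc_2\}$), but $G_J$ contains a \texttt{writes} node while $G_P$ contains no tuple from \texttt{writes}, so condition~(2) of Definition~\ref{def:hom} rules out any homomorphism at all. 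The correct resolution is not a conversion lemma but a restriction of scope: the proposition is to be read for full lineage (equivalently, after the algorithm's completion step has restored the missing join tuples), in which case your own observation --- that every tuple used in any derivation of $O$ already lies in $I$, so each $\alpha_i$ maps $V_J$ into $V_P$ --- closes the argument immediately. Replace the planned technical section with that hypothesis and your proof is complete; as written, the deferred step is a gap, albeit one the paper shares.
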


\begin{proof}
Suppose we have a set of homomorphisms $\mathcal{H}$ that covers $G_P$, so every $h \in \mathcal{H}$ maps all the nodes of $G_J$ to some of the nodes of $G_P$ such that Definition \ref{def:hom} holds. It is enough to show that every $h \in \mathcal{H}$ is equivalent to an assignment of the provenance tuples to $Q$ that produces the otuput tuple, since $\bigcup_{h \in \mathcal{H}} h[V_J] = V_P$ so all provenance tuples are used for some assignment to $Q$. Condition (1) in Definition \ref{def:hom} defines that if two atoms share a variable, the two tuples to which they are mapped to also share a constant in the same index. Condition (2) says that all homomorphisms map atoms to tuples with the same relation name, and finally condition (3) says that the mapping produces the output tuple. Thus, every $h \in \mathcal{H}$ is essentially an assignment to $Q$.
Similarly for the other direction, assume $Q$ is consistent, then it uses all of the provenance tuples to generate the output tuple. Every assignment defines a homomorphism that maps the join graph $G_J$ to the provenance graph $G_P$ such that Definition \ref{def:hom} holds. Denote these homomorphisms by $\mathcal{H}$.
Since $Q$ uses all provenance tuples, $\mathcal{H}$ covers $G_P$. 
\end{proof}

For instance, the query in Figure \ref{fig:sql} is consistent w.r.t. the \ex\ in Figure \ref{fig:example} since there are two homomorphisms that map its join graph in Figure \ref{fig:join} to the provenance graph of the first row in Figure \ref{fig:prov_graph}. One is described in Example \ref{ex:homo}. 


From Proposition \ref{prop:consistent_graph} it follows that given a pair $(I,O)$ and a query $Q$ such that (1) $Q$ has the smallest connected join graph and (2) there is a set of homomorphisms from the join graph of $Q$ to the graph of $(I,O)$ that covers it, then $Q$ is consistent-minimal. We use this observation when inferring queries as we show next. 

\subsection{Homomorphism Based Algorithm}
We start by presenting a solution for the case where a \ex\ includes full lineage, and then relax this solution to account for joinless lineage. The input to the algorithm is a \ex\ $Ex$ and the database schema graph, and it consists two parts as follows. 
The first step of the algorithm is finding the projected attributes of the query. To this end, we generate a list of candidates from each row of $Ex$. The candidates $C_i$ of the row $(O_i, I_i)$ are the attribute of the subset of relation names of tuples in $I_i$ that share a common constant with $O_i$. The algorithm then intersect the candidates of the projection to get the projected attributes $\bigcap_i C_i$.

\begin{example}
Reconsider the first row of the \ex\ shown in Figure \ref{fig:example}. The output values `SIGMOD' and `Alice' are searched in the provenance tuples. Since the tuples annotated by $c_2, a_2$ are the only ones that contain them, they are the candidates from the first row. In the same process, we get the tuples $c_1, a_3$ from the second row. For each output attribute, we intersect the attributes that are candidate for projection, e.g., $\{conf.cname\} \cap \{conf.cname\} = \{conf.cname\}$ for the first attribute. 
\end{example}


At a high level, the main idea of the second step of the algorithm is to generate all possible join graphs, and for each such graph, search for a set of homomorphisms that covers the provenance graph of each one of the rows in $Ex$.  When such a set is found the corresponding query is returned.
The join graphs generation is done in
a growing size order, which guarantees that the first graph satisfying the coverage condition is minimal.
In fact, as  shown in~\cite{DeutchG19}, if a consistent query exists, it must be of length at least $d$ and at most $k+d\cdot n$, where $k$ is the number of attributes of the output tuple, $d$ is the number of distinct relations in the provenance, and $n$ is the number of tuples in the largest provenance set. Thus, we consider only graphs with a least $d$ nodes and at most $k+d\cdot n$ that satisfy essential conditions for a consistent-minimal query: (1) they are connected, and (2)~they include exactly the same relation names appearing in the provenance of each one of the rows in $Ex$. 
Finally, the algorithm adds selection conditions, by
inspecting possible values for constants in each attribute according to the tuples in the \ex. For attributes that take a single value, it assigns a constant in the query.

\begin{example}
Reconsider the \ex\ in Figure \ref{fig:example} with the Full Prov. column and the schema of the database in Figure~\ref{fig:db}. The provenance graph of the first row is depicted in Figure~\ref{fig:prov_graph}. The algorithm finds the candidates for projection which are $conf$.$cname$ and $author$.$aname$ in both rows. We then start generating join graphs. We consider only connected graphs, that include all the relations appearing in the \ex\ with at least $7$ nodes, as the number of distinct relations in the \ex. The only graph that fits this description is given in Figure \ref{fig:join}.  As shown in Section \ref{subsec:prov_graph}, there is a set of homomoprphisms that covers the provenance graph of each one of the rows in the \ex. 
Finally, the algorithm adds constants and returns the query in Figure~\ref{fig:sql}.
\end{example}


\paragraph*{Joinless lineage} 
We next present a heuristic to account for \emph{joinless lineage}, a fragment of partial lineage where join relations containing only foreign keys is missing from the lineage. 
More formally, suppose $\{t_1, t_2, t_3, \ldots, t_k\} \subseteq I$ constitute a full assignment present in the provenance set of an output tuple $O$. Assume the relation of $t_1$ is joined with the relation of $t_3$ in the schema by a join table $t_2$ such that the attributes of the table of $t_2$ are only the primary keys of $t_1$ and $t_3$, then $I \setminus \{t_2\}$ is an acceptable explanation.
This fragment is particularly appealing since it allows for intuitive explanations while nicely preserving the information encapsulated in the lineage. 
In this case, the algorithm will ``complete'' the explanation by finding the missing tuples. We are given a \ex\ where the provenance $I$ in each row contains all tuples except join relations that connect two tuples $t_i, t_j$ that are present in $I$.
We add a procedure after finding the projection attributes, that checks whether the provenance graph $G_P$ of each row is connected. If not, there is a gap of size one between two tuples $t_i$ and $t_j$ (this can be detected by finding the shortest path between them in the schema graph). We then look for a relevant tuple to connect $t_i$ and $t_j$ (with relations $R_i$ and $R_j$ resp.) through another relation $R_v$. This is done by querying the database to get the tuple $t_{v}$.

\begin{example}
Consider the \ex\ in Figure \ref{fig:example} with the Partial Prov. column.
We find that the provenance graph is not connected, and specifically, that $a_2$ and $p_1$ do not share a constant and their relations, according to the schema, are connected by the $writes$ table, so we query the $writes$ relation in Figure \ref{fig:db}, with the $aid$ of $a_2$ and the $wid$ of $p_1$. The result is the tuple $w_1$ so we add it to the provenance. We do the same for $c_2$ and $d_1$ to find $dc_2$. In the second row, we add the tuples $w_3$ and $dc_1$ and then continue to the graph generation step to find the query.
\end{example}
\subsection{From Values to Provenance Tuples}\label{sec:convert}



To allow users to easily formulate explanations, we assume that the initial explanation is composed of individual values, and the values are then mapped to tuples in the database. The challenge is finding tuples  that most resemble the intended tuples. We will draw on works such as \cite{Psallidas} and use a value similarity score that compares the values inputted by the user and the tuples in the database. The score of a tuple $t$ w.r.t an inputted value $v$ can be defined as $score(t,v) = max_{t[a]} sim(t[a],v)$, i.e., the maximum similarity between a cell value in $t$, $t[a]$, and the value~$v$. The $sim$ function can be changed according to the context, e.g., string similarity or $L_1$ norm for numbers. 
In our implementation, we have used a greedy approach so that if a cell contains a string that exactly matches the inputted string, we return it immediately. 

\begin{example}
Reconsider our running example with the output examples depicted in Figure \ref{fig:example}. An explanation for Alice, SIGMOD may be `TAU', `Alice', `X', `Y',  `SIGMOD', `Databases'. The values would then be mapped the tuples $o_2, a_2, p_1, p_2, c_2, d_1$ since these are the tuples most similar to the inputted values as they have an attribute that is identical to them. For example, $a_2[aname] = Alice$. 
\end{example}



\section{Experiments}\label{sec:exp}

        
        
        

\begin{table}[!ht]
    \centering \footnotesize
    \caption{Queries used in the experiments}\label{tbl:queries}
    \begin{tabularx}{\linewidth}{| c | X | c | c | c | c |}
        \hline {\bf Num.} & {\bf Query} \\
        \hline 1 & SELECT author.name FROM writes, pub, conf, author WHERE writes.pid = pub.pid AND writes.aid = author.aid AND pub.cid = conf.cid AND conf.name LIKE 'SIGMOD'; \\
        
        \hline 2 & SELECT pub.title FROM conf, domain\_conf, pub, domain\_pub, domain WHERE conf.cid = domain\_conf.cid AND conf.cid = pub.cid AND domain\_conf.did = domain.did AND pub.pid = domain\_pub.pid AND domain.name LIKE 'Databases'; \\
        
        \hline 3 & SELECT author.name FROM writes, pub, conf, author WHERE writes.pid = pub.pid AND writes.aid = author.aid AND pub.cid = conf.cid AND conf.name LIKE 'SIGMOD' AND pub.year > 2005; \\
        
        \hline 4 & SELECT author.name FROM writes, pub, conf, author WHERE writes.pid = pub.pid AND writes.aid = author.aid AND pub.cid = conf.cid AND conf.name LIKE 'SIGMOD' AND pub.year > 2005 AND pub.year < 2015; \\
        
        \hline 5 & SELECT author.name FROM author, writes, conf, domain\_conf, pub, domain\_pub, domain WHERE author.aid = writes.aid AND writes.pid = pub.pid AND conf.cid = domain\_conf.cid AND conf.cid = pub.cid AND domain\_conf.did = domain.did AND pub.pid = domain\_pub.pid AND domain.name LIKE 'Databases'; \\
        
        \hline 6 & SELECT org.name FROM writes, pub, domain\_conf, domain, author, org, conf, domain\_pub WHERE writes.pid = pub.pid AND writes.aid = author.aid AND pub.pid = domain\_pub.pid AND pub.cid = conf.cid AND domain\_conf.did = domain.did AND domain\_conf.cid = conf.cid AND author.oid = org.oid AND pub.year > 2005 AND domain.name LIKE 'Databases'; \\
        
        \hline 7 & SELECT author.name FROM writes, pub, conf, author, org WHERE author.oid = org.oid AND writes.pid = pub.pid AND writes.aid = author.aid AND pub.cid = conf.cid AND conf.name LIKE 'VLDB' AND org.name LIKE `Tel Aviv University'; \\
        
        \hline 8 & SELECT conf.name FROM org, author, writes, conf, pub, WHERE org.oid = author.oid AND author.aid = writes.aid AND writes.pid = pub.pid AND conf.cid = pub.cid AND pub.year = 2005; \\
        
        \hline 9 & SELECT pub.year FROM author, writes, org, pub WHERE author.aid = writes.aid AND author.oid = org.oid AND writes.pid = pub.pid AND org.name LIKE `IBM'; \\
        \hline
    \end{tabularx}
    \vspace{-4mm}
\end{table}


\begin{figure}
    \centering
    \includegraphics[trim=0cm 0.5cm 0cm 0cm, width=0.8\linewidth]{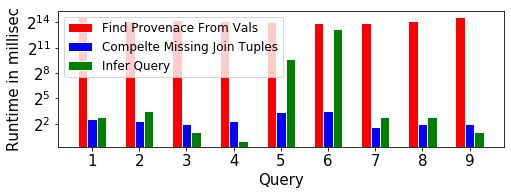}
    \caption{Runtime breakdown for query inference from \ex\ with two rows. The red bars are the time to extract the provenance tuples from the given values, the blue bars are the time to get the missing join tuples, and the green bars are the time to infer the query from the completed prov.}
    \label{fig:runtimes}
    \vspace{-3mm}
\end{figure}

We have implemented our algorithm in Python 3.6 with an underlying database engine in MySQL 5.7, and performed a preliminary evaluation of our approach. Particularly we assessed the queries inferred by the algorithm using joinless provenance, and measured the runtimes for different queries. For the first stage for inferring tuples from values we used the SQL `LIKE' function for strings. 
We evaluated queries 1--9 (Table \ref{tbl:queries} depicts three of the queries) with the MAS dataset~\cite{mas}. The examples consisted of two output examples and their provenance values. 


We observed that the inferred queries contained additional filters that were not included in the original query but are present in the \ex. The reason for that is the small number of examples given as input. This observation raises the need for interactivity in the system, that presents optional selection criteria and allows the user select or refine them based on the desired actual query. Apart from these additional filters, the algorithm inferred all queries correctly from a set of two rows. This is due to its greedy approach of trying the smaller join graphs first the returning the first query that has a set of homomorphisms that covers all provenance graphs.

The runtimes of the algorithm are shown in log-scale in Figure~\ref{fig:runtimes}. The stage that was most costly was inferring the initial provenance tuples from the database. The next stage was finding the connecting join tuples. Here the queries are more focused so this stage is faster. The last stage of inferring the query from the provenance using our homomorphism algorithm is usually very fast. However, the more complex query 6 (shown in Table \ref{tbl:queries}) contains 7 joins took $9$ seconds to infer. Since more joins are involved, the algorithm had to inspect more complex homomorphisms and, thus, took longer to find the query join graph and check its consistency. 

\section{Open problems and future work}\label{sec:conc}
We have presented an initial solution for the problem of inferring queries based on output examples with partial explanations represented as values from the database. The problem requires extensive investigation, which is the subject of our on-going work. 
In particular we consider other, more relaxed variants of explanations. 
This problem calls for different techniques that are prevalent in the field of query-by-output where column mappings are performed to match an attribute in the output to an attribute in one of the database tables. We also consider a natural language based implementation for the explanations. 


\bibliographystyle{abbrv}
\bibliography{bibtex}

\end{document}